\newtheorem{thm}{Theorem}
\newtheorem{lem}{Lemma}
\newtheorem{prp}{Proposition}
\newtheorem{cor}{Corollary}
\newtheorem{defn}{Definition}
\newtheorem{example}{Example}
\newtheorem{rem}{Remark}
\newtheorem{fact}{\bf Fact}
\newcommand{\F}{\mathbb{F}}
\newcommand{\C}{\mathbb{C}}
\newcommand{\splitatcommas}[1]{%
\begingroup
\begingroup\lccode`~=`, \lowercase{\endgroup
\edef~{\mathchar\the\mathcode`, \penalty0 \noexpand\hspace{0pt plus 1em}}%
}\mathcode`,="8000 #1%
\endgroup
}
\DeclareMathOperator{\Diag}{Diag} 
\date{}
\begin{document}
\title{A Systematic Construction Approach for All $4\times 4$ Involutory MDS Matrices}
\author{Yogesh Kumar\footnote{Scientific Analysis Group, DRDO, Metcalfe House Complex, Delhi-110054\newline email: \texttt{adhana.yogesh@gmail.com}},
P.R.Mishra\footnote{Scientific Analysis Group, DRDO, Metcalfe House Complex, Delhi-110054\newline email: \texttt{prasanna.r.mishra@gmail.com}},
Susanta Samanta\footnote{R. C. Bose Centre for Cryptology and Security, Indian Statistical Institute, Kolkata-700108\newline email: \texttt{susanta.math94@gmail.com}}, 
Atul Gaur\footnote{Department of Mathematics, University of Delhi, Delhi-110007 \newline email: \texttt{gaursatul@gmail.com}}}
\maketitle

\begin{abstract}\noindent
Maximum distance separable (MDS) matrices play a crucial role not only in coding theory but also in the design of block ciphers and hash functions. Of particular interest are involutory MDS matrices, which facilitate the use of a single circuit for both encryption and decryption in hardware implementations. In this article, we present several characterizations of involutory MDS matrices of even order. Additionally, we introduce a new matrix form for obtaining all involutory MDS matrices of even order and compare it with other matrix forms available in the literature. We then propose a technique to systematically construct all $4 \times 4$ involutory MDS matrices over a finite field $\mathbb{F}_{2^m}$. This method significantly reduces the search space by focusing on involutory MDS class representative matrices, leading to the generation of all such matrices within a substantially smaller set compared to considering all $4 \times 4$ involutory matrices. Specifically, our approach involves searching for these representative matrices within a set of cardinality $(2^m-1)^5$. Through this method, we provide an explicit enumeration of the total number of $4 \times 4$ involutory MDS matrices over $\mathbb{F}_{2^m}$ for $m=3,4,\ldots,8$.\\\\

\noindent\textbf{Keywords:} Diffusion Layer, MDS matrix, Involutory matrix, Finite field 
\end{abstract}

\section{Introduction}
The concepts of confusion and diffusion~\cite{shan} play a crucial role in the design of symmetric key cryptographic primitives. In general, the linear transformation is used to achieve the diffusion property, which can be represented as a matrix. Additionally, it is common practice to include maximum distance separable (MDS) matrices in the linear transformations to achieve optimal diffusion. One application is in the MixColumns operation of AES~\cite{AES}. Moreover, MDS matrices find extensive application in numerous other cryptographic primitives, including stream ciphers such as MUGI~\cite{MUGI} and hash functions like Maelstrom~\cite{MAELSTROM}, Gr$\phi$stl~\cite{GROSTL}, and PHOTON~\cite{PHOTON}. Consequently, the efficacy of MDS matrices within diffusion layers is widely acknowledged, leading to the proposal of various techniques for their design.

Generally, three techniques are employed for generating MDS matrices. The first technique, referred to as direct construction, utilizes algebraic methods to produce an MDS matrix of any order without the necessity of search procedures. The second method involves search-based construction. The third approach integrates the strategies mentioned above into a hybrid methodology. Direct constructions typically derive from Cauchy and Vandermonde matrices, along with their extensions. Examples from the literature include references~\cite{Gupta2023direct, kc2, LACAN2003, Roth1989, sdm}.

Although direct construction methods allow for the possibility of obtaining MDS matrices of any order, there is no assurance of achieving a matrix with the most efficient hardware area, even for smaller sizes. The second technique, referred to as the search-based technique, is currently the only method that can give us the best possible MDS matrix in terms of area. Nevertheless, this methodology is only viable in situations where the field size is not too large and the matrix size is small. When it comes to search techniques, there are different matrix structures to consider. These include circulant, left-circulant, Hadamard, and Toeplitz matrices. A considerable amount of progress has been made in this area, as demonstrated in \cite{GR15,CYCLICM,psa,XORM,skf}.

In the third method, referred to as the hybrid approach, a representative MDS matrix is usually obtained using a search method. Afterwards, this matrix is used to generate several MDS matrices. Following a concise overview of prior research in this field, we will now discuss the motivations of this paper. The idea of a representative matrix for involutory MDS matrices first appeared in \cite{gmt}, presenting a matrix form that generates all $3 \times 3$ representative involutory MDS matrices with two parameters. However, it is worth mentioning that in that paper, the authors did not use the hybrid method involving representative matrices. Instead, they provided an explicit form of a $3 \times 3$ involutory MDS matrix over $\mathbb{F}_{2^m}$, from which all $3 \times 3$ involutory MDS matrices can be derived by varying the parameters over $\mathbb{F}_{2^m}$ with some conditions. The authors in \cite{psa} introduced a hybrid method to effectively produce MDS matrices of size $2^n\times 2^n$. Their method presents the GHadamard matrix, a new matrix form that may be employed to create new (involutory) MDS matrices from a Hadamard (involutory) MDS matrix as a representative matrix. In \cite{Saka}, a method was presented to produce new $n\times n$ MDS matrices that are isomorphic to pre-existing ones. The authors in~\cite{Kumar_MDS2024} demonstrate a hybrid approach for generating all MDS matrices, which results in a search space of $2^{9m}$ to generate all $4\times 4$ involutory MDS matrices. Whereas, according to~\cite{Yang2021}, the search space for generating all $4\times 4$ involutory MDS matrices amounts to $2^{8m}$. Furthermore, a recent study~\cite{Tuncay23} proposes a hybrid method to generate all $4\times 4$ involutory MDS matrices over $\mathbb{F}_{2^m}$, reducing the search space to $(2^m-1)^8$ to identify all $4\times 4$ representative involutory MDS matrices. 

Also, it is important to note that the enumeration of all $4\times 4$ involutory MDS matrices in these papers is limited to finite fields of small order. This motivates us to explore the theoretical aspects of effectively generating all $4\times 4$ involutory MDS matrices over $\mathbb{F}_{2^m}$ and reducing the search space to $(2^m-1)^5$. We present a hybrid method to construct all $4 \times 4$ involutory MDS matrices over $\mathbb {F }_{2^m}$ and demonstrate how our method significantly reduces the search space. Initially, we partition the set of all $4 \times 4$ involutory MDS matrices into equivalence classes and identify representative matrices for each class. Utilizing these representative matrices, we efficiently generate all involutory MDS matrices within the same class through direct computation. We illustrate that these involutory MDS class representative matrices of order $4$ can be found by examining only a small subset of all $4 \times 4$ involutory matrices over $\mathbb{F}_{2^m}$. 

Additionally, we provide some characterizations of involutory MDS matrices of even order. Furthermore, we propose a new matrix form for obtaining all involutory MDS matrices of even order and compare it with other matrix forms available in the literature.\\

\noindent The paper is structured as follows. Section~\ref{Sec:Definiotion} briefly discusses the mathematical background and notations used in the article. In Section~\ref{Section_IMDS_even}, we provide some characterizations of involutory MDS matrices of even order. Section~\ref{Sec:proposed_method} presents the theory underlying the generation of all $4 \times 4$ involutory MDS matrices over $\mathbb{F}_{2^m}$. The algorithm to generate all $4\times 4$ involutory MDS matrices and counting all $4 \times 4$ involutory MDS matrices over some finite fields is detailed in Section~\ref{Sec:Experimental_result}. Finally, we conclude in Section~\ref{Sec:Conclusion}.

\section{Mathematical Preliminaries}~\label{Sec:Definiotion}
In this section, we discuss some definitions and mathematical preliminaries that are important in our context. Let $\F_{2^m}$ be a finite field of order $2^m$, where $m$ is a positive integer. Let $\F_{2^m}^*$ denote the multiplicative group of the finite field $\F_{2^m}$.

Let $M$ be any $n\times n$ matrix over $\mathbb{F}_{2^m}$ and let $|M|$ denote its determinant. A matrix $D$ of order $n$ is diagonal if $(D){i,j}=0$ for $i\neq j$. Using $d_i = (D){i,i}$, we represent the diagonal matrix $D$ as $\Diag(d_1, d_2, \ldots, d_n)$. The determinant of $D$ is $|D| = \prod_{i=1}^{n}{d_i}$. Hence, $D$ is non-singular over $\mathbb{F}_{2^m}$ if and only if $d_i \neq 0$ for $1 \leq i \leq n$. 

In this paper, we denote the identity matrix of order $n$ as $I_n$. Let $\mathcal{L}_{2n}$ be the set of $2n \times 2n$ matrices over $\F_{2^m}$ of type 
$\begin{pmatrix}
A &B\\
C &D
\end{pmatrix}$,
where $A, B, C, D$ are $n \times n$ non-singular matrices. 

An MDS matrix is utilized in various cryptographic primitives as a diffusion layer. The notion of an MDS matrix originates from coding theory, particularly from the domain of maximum distance separable (MDS) codes. An $[n, k, d]$ code is considered MDS if it satisfies the singleton bound $d = n-k + 1$.

\begin{thm}~\cite[page 321]{FJ77} \label{thm:1}
An $[n, k, d]$ code $C$ with generator matrix $G = [ I ~|~ M ]$, where $M$ is a $k \times ( n - k )$ matrix, is MDS if and only if every square sub-matrix (formed from any $i$ rows and any $i$ columns, for any $i = 1, 2,\ldots, min \{k, n - k \}$) of $M$ is non-singular.
\end{thm}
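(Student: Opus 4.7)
The plan is to translate the MDS property into a statement about the generator matrix and then match it up against the submatrix condition on $M$. Recall that an $[n,k,d]$ code is MDS precisely when $d = n-k+1$, and by the standard weight-versus-column-dependence dictionary a codeword $uG$ has weight $\leq w$ iff $u$ lies in the joint kernel of some $n - w$ columns of $G$. Hence $C$ has minimum distance at least $n-k+1$ if and only if every set of $k$ columns of $G$ is linearly independent; since the Singleton bound furnishes the reverse inequality, $C$ is MDS iff every $k$ columns of $G$ are linearly independent.

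Next, I would unpack what this column-independence condition says for $G = [I_k \mid M]$. Any choice of $k$ columns of $G$ consists of some subset of the identity columns together with some columns of $M$. Write $S \subseteq \{1,\ldots,k\}$ for the indices of the identity columns chosen, with $|S| = k - i$ and $0 \leq i \leq \min\{k, n-k\}$, and let $T \subseteq \{1,\ldots,n-k\}$, with $|T| = i$, index the chosen columns of $M$. After permuting rows so those in $S$ come first and columns so the identity columns come first, the resulting $k \times k$ submatrix of $G$ takes the block form
\[
\begin{pmatrix} I_{k-i} & M_{S,T} \\ 0 & M_{\overline{S},T} \end{pmatrix},
\]
where $\overline{S} = \{1,\ldots,k\} \setminus S$ and $M_{\overline{S},T}$ denotes the $i \times i$ submatrix of $M$ on rows $\overline{S}$ and columns $T$. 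Its determinant equals $\det(M_{\overline{S},T})$, so the selected $k$ columns of $G$ are linearly independent if and only if $M_{\overline{S},T}$ is non-singular.

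Finally, I would let $S$ and $T$ range freely. As $(S,T)$ varies with $|S| = k-i$ and $|T| = i$, the pair $(\overline{S},T)$ sweeps out every possible pair of $i$-row and $i$-column index sets in $M$; letting $i$ vary over $\{0,1,\ldots,\min\{k,n-k\}\}$ (the case $i = 0$ being the trivial identity submatrix) then covers all square submatrices of $M$. Combining this with the equivalence established in the first paragraph yields the theorem.

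I do not anticipate a substantive obstacle: the content is a clean linear-algebra book-keeping argument. The one delicate point to state carefully is the dictionary between the weight of a codeword and linear dependence among columns of $G$, so that the jump from "MDS" to "every $k$ columns of $G$ are independent" is rigorously justified rather than quoted.
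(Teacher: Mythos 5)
Your proposal is correct. The paper itself gives no proof of this statement --- it is quoted directly from MacWilliams and Sloane --- and your argument (reduce MDS to ``every $k$ columns of $G$ are independent'' via the Singleton bound and the weight/column-dependence dictionary, then evaluate the resulting $k\times k$ determinants of $[I \mid M]$ by block triangularity to land on the square submatrices of $M$) is precisely the standard textbook proof of that cited result.
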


\begin{defn}
A matrix $M$ of order $n$ is said to be an MDS matrix if $[I~|~M]$ is a generator matrix of a $[2n,n]$ MDS code.
\end{defn}

Another way to define an MDS matrix is as follows:

\begin{fact}
A square matrix $M$ is an MDS matrix if and only if every square sub-matrix of $M$ is non-singular. 
\end{fact}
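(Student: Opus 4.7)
The plan is to deduce this fact as an almost immediate corollary of Theorem~\ref{thm:1}. First, I would unpack the definition of an MDS matrix: $M$ is an $n \times n$ MDS matrix precisely when $G = [\,I_n \mid M\,]$ is a generator matrix of a $[2n, n]$ MDS code. In the notation of Theorem~\ref{thm:1}, this corresponds to setting the code length to $2n$ and the dimension to $k = n$, so the right block $M$ has dimensions $k \times (n-k) = n \times n$, matching our hypothesis.

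Next, I would invoke Theorem~\ref{thm:1} directly. Since $k = n$ and the redundancy is also $n-k = n$, we have $\min\{k,\, n-k\} = n$. The theorem therefore asserts that $G$ generates an MDS code if and only if every $i \times i$ submatrix of $M$ (obtained by choosing any $i$ rows and any $i$ columns) is non-singular, for every $i = 1, 2, \ldots, n$. Since $M$ is itself $n \times n$, the range $i \in \{1, 2, \ldots, n\}$ exhausts every possible size of a square submatrix of $M$ (including $M$ itself at $i = n$ and single entries at $i = 1$). Thus the condition from Theorem~\ref{thm:1} is exactly the condition that every square submatrix of $M$ is non-singular.

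There is no real obstacle: the entire argument is a notational unfolding, with the only minor care needed being to verify that the index range $1 \le i \le \min\{k, n-k\}$ coincides with the range covering all square submatrices of $M$ once $k = n - k = n$. One could also remark in passing that this equivalence implicitly supplies two useful sanity checks, namely that every entry of an MDS matrix is non-zero (the $i = 1$ case) and that $M$ is itself non-singular (the $i = n$ case), both of which will be used freely in the sequel.
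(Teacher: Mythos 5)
Your proposal is correct and is exactly the justification the paper intends: the Fact is stated without proof as an immediate consequence of Theorem~\ref{thm:1} applied to the definition of an MDS matrix (code length $2n$, dimension $k=n$, so $\min\{k,n-k\}=n$ and the condition ranges over all square sub-matrices). The only cosmetic caution is that the symbol $n$ denotes the code length in Theorem~\ref{thm:1} but the matrix order in the Fact, which you handle consistently.
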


\noindent Implementing involutory diffusion matrices is more advantageous since it enables using a single module for both the encryption and decryption processes.

\begin{defn}
An involutory matrix is a square matrix $M$ of order $n$ that is self-invertible, or equivalently, satisfies $M^2 = I_n$.
\end{defn}

\noindent One of the fundamental row operations on matrices involves multiplying a row or column of a matrix by a non-zero scalar. The MDS property remains unchanged under such operations. Therefore, we have the following result concerning MDS matrices.


\begin{lem}~\cite[Theorem 4.7]{GR15} \label{lem:1}
Let $M$ be an involutory MDS matrix, then for any non-singular diagonal matrix $D$, $D^{-1}MD$ will also be an involutory MDS matrix.
\end{lem}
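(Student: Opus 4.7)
The plan is to split the statement into two independent verifications: that $D^{-1}MD$ is involutory, and that it is MDS. The first part is a one-line algebraic identity, while the second part requires tracking what happens to all square sub-matrices under the conjugation.

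For the involutory part, I would simply compute
\[
(D^{-1}MD)(D^{-1}MD) = D^{-1} M (DD^{-1}) M D = D^{-1} M^2 D = D^{-1} I_n D = I_n,
\]
using $M^2 = I_n$. No subtlety is involved here.

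For the MDS part, I would invoke the Fact stated above the lemma: it suffices to show every square sub-matrix of $D^{-1}MD$ is non-singular. Writing $D = \Diag(d_1,\ldots,d_n)$ with each $d_i \neq 0$, the $(i,j)$ entry of $D^{-1}MD$ is $d_i^{-1} m_{ij} d_j$. For any index sets $I = \{i_1 < \cdots < i_k\}$ and $J = \{j_1 < \cdots < j_k\}$, the corresponding $k\times k$ sub-matrix $N_{I,J}$ of $D^{-1}MD$ factors as
\[
N_{I,J} = \Diag(d_{i_1}^{-1},\ldots,d_{i_k}^{-1}) \cdot M_{I,J} \cdot \Diag(d_{j_1},\ldots,d_{j_k}),
\]
where $M_{I,J}$ is the corresponding sub-matrix of $M$. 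Taking determinants gives $|N_{I,J}| = \left(\prod_{s=1}^{k} d_{j_s}/d_{i_s}\right) |M_{I,J}|$, which is non-zero because each $d_i$ is non-zero and $|M_{I,J}| \neq 0$ by the MDS property of $M$.

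There is no real obstacle; the only point demanding a little care is the bookkeeping that shows conjugation by $D$ acts as independent row-scaling and column-scaling on every sub-matrix, so that the determinant of each sub-matrix of $D^{-1}MD$ is merely a non-zero scalar multiple of the determinant of the corresponding sub-matrix of $M$. Once this factorization is written down, both the involutory and MDS conclusions follow immediately.
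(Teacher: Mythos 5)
Your proof is correct and matches the justification the paper itself gives: the lemma is cited from the literature without a formal proof, but the preceding remark that the MDS property is invariant under non-zero row and column scalings is exactly the observation you formalize via the factorization $N_{I,J} = \Diag(d_{i_1}^{-1},\ldots,d_{i_k}^{-1}) \, M_{I,J} \, \Diag(d_{j_1},\ldots,d_{j_k})$. Both the involutory computation and the sub-matrix determinant argument are sound, so nothing is missing.
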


\section{Construction of Involutory MDS Matrices}\label{Section_IMDS_even}
In this section, we explore some characterizations of involutory MDS matrices for even orders. There exists a direct correspondence between the set of $n \times n$ nilpotent matrices of index $2$ over $\F_{2^m}$ and the set of $n\times n$ involutory matrices. It is straightforward to demonstrate that each mapping from $M \mapsto M+I_n$ establishes a bijection between these sets. This correspondence yields valuable insights into identifying involutory matrices.

Rather than focusing on involutory matrices directly, we can examine nilpotent matrices of index $2$. The relationship between involutory matrices and nilpotent matrices of index $2$ is detailed in the following proposition.

\begin{prp} \label{prp:1}
Let $M$ be an $n\times n$ involutory matrix of even order over $\mathbb{F}_{2^m}$, then $(M+I_n)$ is a nilpotent matrix of index $2$ over $\mathbb{F}_{2^m}$, and its rank can be at most $\frac{n}{2}$.
\end{prp}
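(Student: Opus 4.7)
The plan is to exploit characteristic $2$ and a standard rank-nullity argument, both of which are short.

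First I would verify nilpotency of index (at most) $2$ by direct expansion. Since we are working over $\mathbb{F}_{2^m}$ we have $2M = 0$, and since $M$ is involutory we have $M^2 = I_n$, so
\[
(M+I_n)^2 \;=\; M^2 + 2M + I_n \;=\; M^2 + I_n \;=\; I_n + I_n \;=\; 0.
\]
Thus $(M+I_n)^2 = 0$, which gives nilpotency with $(M+I_n)^2 = 0$ (equivalently, nilpotent of index at most $2$; the statement's phrasing covers the degenerate case $M = I_n$ where $M+I_n = 0$ trivially).

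Next I would extract the rank bound from this identity. Setting $N = M+I_n$, the equation $N^2 = 0$ means that $\operatorname{Image}(N) \subseteq \operatorname{Kernel}(N)$ as subspaces of $\mathbb{F}_{2^m}^{\,n}$. Taking dimensions yields $\operatorname{rank}(N) \leq \operatorname{nullity}(N)$, and combining with the rank-nullity theorem $\operatorname{rank}(N) + \operatorname{nullity}(N) = n$ gives
\[
2\,\operatorname{rank}(N) \;\leq\; \operatorname{rank}(N) + \operatorname{nullity}(N) \;=\; n,
\]
hence $\operatorname{rank}(M+I_n) \leq n/2$.

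There is essentially no obstacle here; the only subtlety is conceptual rather than technical: one must remember that the cross term $2M$ vanishes in characteristic $2$, which is precisely what allows the same matrix $M+I_n$ to be both ``close to the identity'' and nilpotent. The evenness of $n$ does not enter the argument itself but matches the paper's overall focus and ensures the bound $n/2$ is an integer.
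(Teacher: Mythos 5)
Your proof is correct and follows essentially the same route as the paper's: expand $(M+I_n)^2$ using $M^2=I_n$ and characteristic $2$ to get nilpotency, then deduce $\operatorname{Image}(N)\subseteq\operatorname{Kernel}(N)$ and apply rank-nullity for the bound $\operatorname{rank}(N)\leq n/2$. Your version is slightly more explicit about the vanishing cross term $2M$, but there is no substantive difference.
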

\begin{proof}
We have
\[(M+I_n)^2=(M^2+I_n)=0.\]
Thus, the index of $(M+I_n)$ is 2. Let's say $N=M+I_n$.

Next, we demonstrate that the rank of N cannot be more than $\frac{n}{2}$.
Let's define the null space and range space of $N$ as $N(N)$ and $R(N)$, respectively. Consequently, $rank(N)=\dim(R(N))$. Let $y$ be an arbitrary element in the set $R(N)$. Then a vector $x$ exists such that $y=Nx$. Now, $Ny=N(Nx)=N^2x=0$. It means $y \in N(N)$. We now have  $R(N)\subset N(N)$. It implies
	
\begin{equation}\label{prp:1_eq:1}
    \dim(R(N))\leq \dim(N(N)). 
\end{equation}

From rank-nullity theorem	
\begin{equation}\label{prp:1_eq:2}
    \dim (R(N))+\dim(N(N))=n.
\end{equation}
(\ref{prp:1_eq:1}) and (\ref{prp:1_eq:2}) together imply that $rank(N)=\dim(R(N))\leq \frac{n}{2}$.
Therefore, rank of $N$ or $(M+I_n)$ can not be more than $\frac{n}{2}$.
\end{proof}
\noindent If $M$ is an involutory MDS matrix in $\mathcal{L}_{2n}$~\footnote{$\mathcal{L}_{2n}$ denotes the set of $2n \times 2n$ matrices over $\F_{2^m}$ of type 
$\begin{pmatrix}
A &B\\
C &D
\end{pmatrix}$, 
where $A, B, C, D$ are $n \times n$ non-singular matrices.}, then, the next theorem demonstrates that the rank of $M+I_{2n}$ is exactly half of the order of $M$. Here, $I_{2n}$ is a $2n\times 2n$ identity matrix.

\begin{thm}\label{thm:2}
Let $M$ be an involutory MDS matrix in $\mathcal{L}_{2n}$. Then, rank$(M+I_{2n})=n$.
\end{thm}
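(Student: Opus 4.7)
The plan is to sandwich $\operatorname{rank}(M+I_{2n})$ between $n$ (from above) and $n$ (from below) and conclude equality. Proposition~\ref{prp:1} already delivers the upper bound $\operatorname{rank}(M+I_{2n})\le n$, since $M$ is involutory of even order $2n$. So the real work is to exhibit an independent set of $n$ columns (or rows) inside $M+I_{2n}$, for which the block structure inherited from $\mathcal{L}_{2n}$ is tailor-made.

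First I would write
\[
N \;=\; M+I_{2n} \;=\; \begin{pmatrix} A+I_n & B \\ C & D+I_n \end{pmatrix},
\]
and consider the $2n\times n$ submatrix $N_1 = \begin{pmatrix} A+I_n \\ C \end{pmatrix}$ formed by the first $n$ columns of $N$. If $\mathbf{c}\in\mathbb{F}_{2^m}^n$ satisfies $N_1\mathbf{c}=0$, then reading the bottom block yields $C\mathbf{c}=0$; since $M\in\mathcal{L}_{2n}$ guarantees that $C$ is non-singular, this forces $\mathbf{c}=0$. Hence the first $n$ columns of $N$ are linearly independent, so $\operatorname{rank}(N)\ge n$.

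Combining this lower bound with the upper bound $\operatorname{rank}(N)\le n$ from Proposition~\ref{prp:1} immediately yields $\operatorname{rank}(M+I_{2n})=n$. One could equivalently use the block $B$ and look at the last $n$ columns, or use $B$ or $C$ with rows; the argument is symmetric. Notice that the full MDS hypothesis is not needed for this step on its own, only the non-singularity of one off-diagonal block, which is already encoded in the definition of $\mathcal{L}_{2n}$; the involutory hypothesis enters solely through Proposition~\ref{prp:1} to provide the matching upper bound.

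There is essentially no technical obstacle here: the only subtlety is recognising that the $2n\times n$ block of columns has a non-singular bottom half, which makes the required independence free. The statement is really a clean marriage between the abstract rank inequality for index-$2$ nilpotents of even order and the combinatorial/structural constraint that $B$ and $C$ be invertible, which the MDS assumption hands us for nothing.
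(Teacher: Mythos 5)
Your proof is correct and follows essentially the same route as the paper: the upper bound comes from Proposition~\ref{prp:1}, and the lower bound comes from the non-singularity of the lower-left block $C$ of $M+I_{2n}$, which the paper extracts as an $n\times n$ minor (arguing by contradiction via the MDS property) and you extract directly as the independence of the first $n$ columns. Your observation that the non-singularity of $C$ is already guaranteed by $M\in\mathcal{L}_{2n}$, so the full MDS hypothesis is not needed for the lower bound, is a valid minor sharpening but does not change the substance of the argument.
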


\begin{proof}
    Let $M=(m_{ij})$ and $M+I_{2n}=N=(n_{ij})$. Then, we have
    \begin{equation}\label{thm:2_eq:1}
    m_{ij}=
    \begin{cases}
    n_{ij}\text{ if }i\neq j\\
    n_{ij}+1 \text{ otherwise }.
    \end{cases}
\end{equation}

According to Proposition \ref{prp:1},  $N$ is a nilpotent matrix of index $2$ and rank$(N)\leq\frac{2n}{2}=n$. There are two possible outcomes: either rank$(N)=n$ or rank$(N)<n$. If possible, assume that rank$(N))\neq n$. Then, we must have rank$(N)<n$. In this scenario, all $n\times n$ minors of $N$ are zero.

Take the $n \times n$ sub-matrix $W=(w_{ij})$ obtained from $N$ by the last $n$ rows and the first $n$ columns from it. The determinant of $W$ must be zero. It is evident from (\ref{thm:2_eq:1}) that $W$ is also a sub-matrix of $M$ whose determinant is zero. This is a contradiction because $M$ is an MDS matrix. As a result, we have  rank$(N)=$ rank$(M+I_{2n})=n$.
\end{proof}

\noindent The following theorem discusses a specific form of an involutory matrix $M$ in $\mathcal{L}_{2n}$.

\begin{thm}\label{thm:3}
Let $M \in \mathcal{L}_{2n}$. Then $M$ is involutory if and only if $M$ can be written as follows:
\begin{equation}\label{Eqn_IMDS_even}
    \begin{aligned}
        M &=
        \begin{pmatrix}
            PC &PCP\\
            C &CP
        \end{pmatrix}+I_{2n},
    \end{aligned}
\end{equation}
where $P$ and $C$ are $n\times n$ non-singular matrices over $\mathbb{F}_{2^m}$.
\end{thm}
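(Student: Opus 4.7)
The plan is to work with $N := M + I_{2n}$ throughout, exploiting the identity $M^2 = I_{2n} \iff N^2 = 0$, which holds because we are in characteristic $2$.

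For the ``if'' direction, I would substitute the proposed form
$$N = \begin{pmatrix} PC & PCP \\ C & CP \end{pmatrix}$$
and compute $N^2$ directly, block by block. Each of the four blocks of $N^2$ turns out to be a sum of two identical terms; for instance the $(1,1)$-block equals $PC\cdot PC + PCP\cdot C = 2\,PCPC$, which vanishes in characteristic $2$. Hence $N^2 = 0$, so $M^2 = I_{2n}$, and $M$ is involutory.

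For the ``only if'' direction, write $M = \begin{pmatrix} A & B \\ C & D \end{pmatrix}$ with $A,B,C,D$ non-singular by the hypothesis $M\in\mathcal{L}_{2n}$. Expanding $M^2 = I_{2n}$ in blocks yields the four relations $A^2 + BC = I_n$, $AB = BD$, $CA = DC$, and $CB + D^2 = I_n$. From $CA = DC$ I extract $D = CAC^{-1}$, and from $A^2 + BC = I_n$, together with characteristic $2$, I get $BC = A^2 + I_n = (A+I_n)^2$. Since $BC$ is non-singular, so is $A+I_n$.

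I would then define $P := (A+I_n)C^{-1}$, which is non-singular, and keep the block $C$ as the second parameter. Matching blocks of $\begin{pmatrix} PC & PCP \\ C & CP \end{pmatrix}$ against those of $N = M + I_{2n}$ becomes mechanical: $PC = A + I_n$ by construction, the $(2,1)$-block is $C$ by construction, $CP = C(A+I_n)C^{-1} = CAC^{-1} + I_n = D + I_n$, and $PCP = (A+I_n)^2 C^{-1} = BC\cdot C^{-1} = B$. The main obstacle (really the only creative step) is guessing the change of variables $P = (A+I_n)C^{-1}$; once it is in hand, everything reduces to bookkeeping. The critical use of the field is the identity $(A+I_n)^2 = BC$, which both underwrites the non-singularity of $P$ and makes the block structure of $N$ factor as claimed.
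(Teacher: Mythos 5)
Your proof is correct, and it takes a genuinely different route from the paper's. The paper works with $N=M+I_{2n}$ and leans on the nilpotency/rank machinery of its Proposition~1: since $C$ is non-singular, $\mathrm{rank}(N)=n$ exactly, so the first $n$ rows of $N$ lie in the row space of $[\,C \mid D\,]$ (giving $A=PC$) and the last $n$ columns lie in the column space of the first (giving $D=CQ$); a block elimination then forces $B=PCQ$, and finally $N^2=0$ yields $P=Q$. You instead expand $M^2=I_{2n}$ directly into the four block relations, extract $(A+I_n)^2=BC$ and $D=CAC^{-1}$, and define $P:=(A+I_n)C^{-1}$ explicitly, after which all four blocks match by computation. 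Your route is more elementary and self-contained (it never invokes the rank bound for index-$2$ nilpotents), and it has the added virtue of producing closed-form expressions for $P$ and $C$ in terms of the blocks of $M$ --- exactly what the paper later needs when recovering $(P,C)$ from a given involutory MDS matrix. It also cleanly justifies the non-singularity of $P$ via $(A+I_n)^2=BC$, a point the paper's proof asserts somewhat abruptly when it first introduces $P$ (there it only becomes clear later, from $B=PCQ$ being non-singular). What the paper's approach buys in exchange is continuity with the surrounding narrative of the section, which is organized around nilpotent matrices of index $2$ and their rank. The forward directions are essentially identical in both arguments.
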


\begin{proof}
Suppose that $M$ is an involutory matrix. Let $M=N+I_{2n}$. So $N$ is a nilpotent matrix of index $2$. Since $M\in\mathcal{L}_{2n}$, it has the following form: 
\begin{equation*}
M=\begin{pmatrix}
A' &B'\\
C' &D'
\end{pmatrix},
\end{equation*}
where $A',B',C'$, and $D'$ are $n\times n$ non-singular matrices. 
Then
\begin{equation*}
N=\begin{pmatrix}
A'+I_n &B'\\
C' &D'+I_n
\end{pmatrix}
=
\begin{pmatrix}
A &B\\
C &D
\end{pmatrix},
\end{equation*}
where $A=A'+I_n,B=B',C=C'$ and $D=D'+I_n$.\\

Since $C=C'$ is non-singular, we have rank$(N)\geq n$. Furthermore, since $N$ is a nilpotent matrix of index $2$ then its rank cannot exceed $n$. As a result, $rank(N)=n$ is required. Because of this, rows of $A$ can be expressed as a linear combination of rows of $C$. In this instance, a $n \times n$ non-singular matrix $P$ exists such that $A=PC$. Similarly, a linear combination of columns of $C$ can be used to write columns of $D$. It implies that a non-singular matrix $Q$ exists such that $D=CQ$. 
Then, there is
\begin{equation}\label{thm:3_eq:1}
N=\begin{pmatrix}
PC &B\\
C &CQ
\end{pmatrix}.
\end{equation}
Let $O$ be a zero matrix of order $n$.
Pre-multiply $N$ by 
$\begin{pmatrix}
P^{-1} &I_n\\
O &I_n
\end{pmatrix}$ and post-multiply by  
$\begin{pmatrix}
I_n &I_n\\
O &Q^{-1}
\end{pmatrix}$
get the following:
\begin{eqnarray*}
&&\begin{pmatrix}
P^{-1} &I_n\\
O &I_n
\end{pmatrix}N
\begin{pmatrix}
I_n &I_n\\
O &Q^{-1}
\end{pmatrix}\\
&=&\begin{pmatrix}
P^{-1} &I_n\\
O &I_n
\end{pmatrix}
\begin{pmatrix}
PC &B\\
C &CQ
\end{pmatrix}
\begin{pmatrix}
I_n &I_n\\
O &Q^{-1}
\end{pmatrix}\\
&=&\begin{pmatrix}
O &P^{-1}B+CQ\\
C &CQ
\end{pmatrix}
\begin{pmatrix}
I_n &I_n\\
O &Q^{-1}
\end{pmatrix}\\	
&=&\begin{pmatrix}
O &P^{-1}BQ^{-1}+C\\
C &O
\end{pmatrix}.
\end{eqnarray*}
Given that $N$ has the rank $n$, we must have $P^{-1}BQ^{-1}+C=O$ either or $B=PCQ$.
By substituting the value of $B$ in (\ref{thm:3_eq:1}) we get
\begin{equation*}
N=\begin{pmatrix}
PC &PCQ\\
C &CQ
\end{pmatrix}.
\end{equation*}
In addition, since $N$ is a nilpotent matrix of index 2, we have $N^2=0$. To put it another way, i.e.,
\begin{equation*}
\begin{pmatrix}
PCPC+PCQC &PCPCQ+PCQCQ\\
CPC+CQC &CPCQ+CQCQ
\end{pmatrix}=O.
\end{equation*}	
Comparing both sides, we obtain the following:
\[PCPC+PCQC=O\implies PC(P+Q)C=0\implies P=Q.\]
When $Q$ is substituted in (\ref{thm:3_eq:1}), we have $M$ in desired form as follows:
\[M=\begin{pmatrix}
PC &PCP\\
C &CP
\end{pmatrix}+I_{2n}.\]
Conversely, if $M$ has the following form:
\[M=\begin{pmatrix}
PC &PCP\\
C &CP
\end{pmatrix}+I_{2n}.\]
We have
\begin{eqnarray*}
&&M^2=\begin{pmatrix}
PC &PCP\\
C &CP
\end{pmatrix}^2+I_{2n} \\
&\implies& M^2= \begin{pmatrix}
PCPC+PCPC &PCPCP+PCPCP\\
CPC+CPC &CPCP+CPCP
\end{pmatrix}+I_{2n}\\
&\implies& M^2=I_{2n}.
\end{eqnarray*}

Therefore, if $M$ is in the form of (\ref{Eqn_IMDS_even}), then $M$ is involutory. This completes the proof.
\end{proof}

\noindent Since each sub-matrix of an MDS matrix is non-singular, according to Theorem~\ref{thm:3}, we can find all involutory MDS matrices of even order $2n$ by iterating over the entries of the matrices $P$ and $C$ of order $n$. Consequently, the search space for finding all involutory MDS matrices of even order $2n$ is reduced to $2^{2n^2m}$ over the field $\mathbb{F}_{2^m}$. Next, we will explore the relationship between our proposed matrix form and other matrix forms available in the literature for finding all involutory MDS matrices of even order.

\subsection{Relation to the Matrix Forms available in the literature}
In \cite{Yang2021}, the authors propose a matrix form to find all involutory MDS matrices of even order over $\mathbb{F}_{2^m}$. The proposed matrix form is given by:

\begin{equation*}
    \begin{aligned}
        M&=
        \begin{bmatrix}
            A_1 & A_2\\
            A_2^{-1}(I_n+A_1^2) & A_2^{-1}A_1A_2
        \end{bmatrix},
    \end{aligned}
\end{equation*}

where $A_1$ and $A_2$ are non-singular matrices over $\mathbb{F}_{2^m}$. Comparing our proposed matrix form given in~(\ref{Eqn_IMDS_even}), we have
\begin{equation*}
    \begin{aligned}
        & PC+I_n= A_1, && PCP= A_2\\
        & C= A_2^{-1}(I_n+A_1^2), && CP+I_n= A_2^{-1}A_1A_2.
    \end{aligned}
\end{equation*}
 
Note that since $M$ is MDS, we have $A_2^{-1}(I_n+A_1^2)$ is non-singular. Therefore, $(I_n+A_1^2)$ must also be non-singular. Now, $(I_n+A_1^2)=(I_n+A_1)^2$, implying that $(I_n+A_1)$ is also non-singular. Therefore, from the above equations, we can derive that
\begin{equation*}
    \begin{aligned}
        P=(I_n+A_1)^{-1}A_2~\text{and}~ C=A_2^{-1}(I_n+A_1^2).
    \end{aligned}
\end{equation*}
\\
\noindent The proposed matrix form in~\cite{Samanta2023} for generating all involutory MDS matrices of even order over $\mathbb{F}_{2^m}$ is given by:

\begin{equation*}
    \begin{aligned}
        M&=
        \begin{bmatrix}
            A_1 & (I_n+A_1^2)A_3^{-1}\\
            A_3 & A_3A_1A_3^{-1}
        \end{bmatrix},
    \end{aligned}
\end{equation*}
where $A_1$ and $A_3$ are non-singular matrices over $\mathbb{F}_{2^m}$. Comparing our proposed matrix form given in~(\ref{Eqn_IMDS_even}), we have
\begin{equation*}
    \begin{aligned}
        & PC+I_n= A_1, && PCP= (I_n+A_1^2)A_3^{-1}\\
        & C= A_3, && CP+I_n= A_3A_1A_3^{-1}.
    \end{aligned}
\end{equation*}

Note that since $M$ is MDS, $A_3$ is non-singular. Additionally, since $(I_n+A_1^2)A_3^{-1}$ is non-singular and $(I_n+A_1^2)=(I_n+A_1)^2$, implying that $(I_n+A_1)$ is also non-singular. Therefore, from the above equations, we can derive that
\begin{equation*}
    \begin{aligned}
        P=(I_n+A_1)A_3^{-1}~\text{and}~ C=A_3.
    \end{aligned}
\end{equation*}

\section{Construction of $4 \times 4$ Involutory MDS Matrices}~\label{Sec:proposed_method}
This section introduces a hybrid technique for generating all $4\times 4$ involutory MDS matrices over $\mathbb{F}_{2^m}$. There are $(2^m)^{16}$ matrices of order $4\times 4$ over $\F_{2^m}$ in existence, of which $((2^m)^4-1)((2^m)^4-2^m)((2^m)^4-(2^m)^2)((2^m)^4-(2^m)^3)$ are non-singular. It is not easy to filter out all $4\times 4$ involutory MDS matrices from the collection of all $4\times 4$ non-singular matrices. By resolving a small set of non-linear equations over $\F_{2^m}$, the authors in \cite{gmt} propose a matrix form to generate all $3 \times 3$ involutory MDS matrices over $\F_{2^m}$. However, it is not simple to generalize the concept of \cite{gmt} for orders of $4 \times 4$ or greater.

For a brute force search for involutory MDS matrices over $\mathbb{F}_{2^m}$, the search space is the entire set of non-singular matrices over $\mathbb{F}_{2^m}$. Some obvious checks may be imposed on the structure of the matrix for MDS search, such as choosing the entries of the matrices to be non-zero. However, this would not result in a significant reduction in the space and search time. Another idea for refining the search is to search over involutory matrices only. This would provide a good amount of reduction in search space but still offers a fairly large search space even for smaller matrix dimensions. For this purpose, we utilize a systematic construction that yields a search space much smaller than the set of all involutory matrices over $\mathbb {F}_{2^m}$.

The main goal of our suggested approach is to first locate all involutory MDS class representative matrices of order $4$ through search and then to obtain all $4 \times 4$ involutory MDS matrices by multiplying a non-singular diagonal matrix and its inverse with these involutory MDS class representative matrices. 

Let $\mathcal{M}_4$ denote the matrix space containing all $4 \times 4$ involutory MDS matrices. According to Lemma~\ref{lem:1}, if $M \in \mathcal{M}_4$, then for any non-singular diagonal matrix $D$, $D^{-1}MD \in \mathcal{M}_4$. Now, let's define an equivalence relation on $\mathcal{M}_4$, such that matrices $M$ and $M'$ are related if there exists an invertible diagonal matrix $D = \Diag(1, b_1, b_2, b_3)$ over $\mathbb{F}_{2^m}^{*}$ such that $M= D^{-1}M'D$. Thus, for $M \in \mathcal{M}_4$, if $\bar{M}$ denotes the equivalence class of $M$, all the matrices in $\bar{M}$ can be obtained by $D^{-1}MD$, varying the diagonal matrices $D = \text{Diag}(1, b_1, b_2, b_3)$ over $\mathbb{F}_{2^m}^{*}$.\\

To find the class representative matrices of all $4 \times 4$ involutory MDS matrices over $\mathbb{F}_{2^m}^*$, we define the matrix form $R$ as follows:

\begin{equation}\label{rep_mat_eqn}
R=\begin{pmatrix}PC &PCP\\C &CP\end{pmatrix}+I_4,
\end{equation}
where $I_4$ is a $4\times 4$ identity matrix, and $P$ and $C$ are the $2\times 2$ matrices given by

\begin{equation*}
    \begin{aligned}
        C&=
        c\begin{pmatrix}
            pq+r &p\\
            q &1
        \end{pmatrix} && 
        P=\begin{pmatrix}
            d+1 &d\\
            d &d+1
        \end{pmatrix},
    \end{aligned}
\end{equation*}
where $p,q,r,c,d\in\F_{2^m}^*$.

It can be observed that the sum of any rows or columns of $P$ is $1$. Additionally, the $2 \times 2$ non-singular matrices $C$ and $P$ over $\mathbb{F}_{2^m}^*$ define the matrix form $R$ mentioned above for determining all involutory MDS class representative matrices of order $4$. Consequently, by searching within this reduced subset of all $4 \times 4$ involutory matrices, the search space is now constrained to $(2^m-1)^5$.\\

\noindent Let $\mathcal{R}_4$ be the matrix space containing all representative involutory matrices of the equivalence classes. Thus, we can write the set $\mathcal{M}_4$ of all $4\times 4$ involutory MDS matrices as follows:

\[
\mathcal{M}_4= \underset{R \text { is MDS}} {\underset {R \in  \mathcal{R}_4} {\bigcup}}\{D^{-1}RD:~D=\Diag(1,b_1,b_2,b_3),~\text{where}~ b_1,b_2,b_3 \in \mathbb{F}_{2^m}^*\}.
\]

We can associate elements of $\mathcal{R}_4$ with the tuple $(p, q, r, c, d)$ because $\mathcal{R}_4$ consists of five variables: $p, q, r, c$, and $d$ in $\mathbb{F}_{2^m}^*$. Now, let's demonstrate that two different tuples correspond to two different elements of $\mathcal{R}_4$.
	
\begin{prp}
The cardinality of the set $\mathcal{R}_4$ is $(2^m-1)^5$.
\end{prp}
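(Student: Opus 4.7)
The plan is to prove that the parameterization map $\Phi:(\mathbb{F}_{2^m}^*)^5\to\mathcal{R}_4$ sending $(p,q,r,c,d)$ to the matrix $R$ defined in~(\ref{rep_mat_eqn}) is a bijection. Surjectivity is immediate from the definition of $\mathcal{R}_4$, so the work is to establish injectivity: if two tuples produce the same $R$, they must agree. Since $|(\mathbb{F}_{2^m}^*)^5|=(2^m-1)^5$, injectivity of $\Phi$ yields $|\mathcal{R}_4|=(2^m-1)^5$.

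The key observation is that the $2\times 2$ sub-matrix of $R$ formed by rows $3,4$ and columns $1,2$ coincides exactly with $C$, since the corresponding block of $I_4$ vanishes. Reading off the four entries of $C=c\begin{pmatrix}pq+r & p\\ q & 1\end{pmatrix}$, one recovers in order
\[
c=R_{4,2},\qquad p=R_{3,2}/c,\qquad q=R_{4,1}/c,\qquad r=R_{3,1}/c+pq,
\]
using characteristic $2$. Each step is unambiguous because $c\in\mathbb{F}_{2^m}^*$ is nonzero. Hence the four parameters $c,p,q,r$ are uniquely determined by $R$.

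For the parameter $d$, I would pass to the top-left $2\times 2$ block of $R$, which equals $PC+I_2$; thus $PC$ is uniquely determined by $R$. A direct computation gives $\det C=c^2 r\neq 0$ (since $c,r\in\mathbb{F}_{2^m}^*$), so $C$ is invertible and $P=(PC)\cdot C^{-1}$ is uniquely determined. Reading off $d=P_{1,2}$ completes the recovery. Combining both stages, $(p,q,r,c,d)$ is uniquely reconstructible from $R$, so $\Phi$ is injective and the cardinality count follows.

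There is no substantive obstacle in this argument; it is a clean parameter-recovery from well-chosen entries. The only subtlety worth flagging is that one must invoke the invertibility of $C$ to recover $P$ globally. Attempting to recover $d$ directly from a single entry such as $R_{1,2}=c[(d+1)p+d]=cd(p+1)+cp$ would force a case split on whether $p=1$ (in characteristic $2$, where $p+1=0$), whereas the route via $P=(PC)\,C^{-1}$ sidesteps this case analysis entirely because $\det C=c^2r$ is unconditionally nonzero.
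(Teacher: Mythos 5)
Your proof is correct and takes essentially the same route as the paper: both arguments establish injectivity of the parameterization by reading $C$ off the lower-left block (recovering $c,p,q,r$ entrywise) and then using the invertibility of $C$ to recover $P$ and hence $d$. Your version is in fact slightly more explicit than the paper's, which silently uses $C_1=C_2$ being non-singular to pass from $C_1P_1=C_2P_2$ to $d_1=d_2$, whereas you justify this via $\det C=c^2r\neq 0$.
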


\begin{proof}
Consider the two tuples, $(p_1,q_1,r_1,c_1,d_1)$ and $(p_2,q_2,r_2,c_2,d_2)$, which are linked to the same elements $R \in \mathcal{R}_4$. Let $P_1, P_2, C_1, C_2$ as 
    \begin{eqnarray*}	
    &&C_1=c_1\begin{pmatrix}
    p_1q_1+r_1 &p_1\\
    q_1 &1
    \end{pmatrix},
    C_2=c_2\begin{pmatrix}
    p_2q_2+r_2 &p_2\\
    q_2 &1
    \end{pmatrix}\\
    &&P_1=\begin{pmatrix}
    d_1+1 &d_1\\
    d_1 &d_1+1
    \end{pmatrix}
    \text{ and }
    P_2=\begin{pmatrix}
    d_2+1 &d_2\\
    d_2 &d_2+1
    \end{pmatrix}.
    \end{eqnarray*} 

    Now, based on the assumption, we have
    \begin{eqnarray*}
    &&\begin{pmatrix}P_1C_1 &P_1C_1P_1\\C_1 &C_1P_1\end{pmatrix}+I_4=\begin{pmatrix}P_2C_2 &P_2C_2P_2\\C_2 &C_2P_2\end{pmatrix}+I_4\\
    &\implies &\begin{pmatrix}P_1C_1 &P_1C_1P_1\\C_1 &C_1P_1\end{pmatrix}=\begin{pmatrix}P_2C_2 &P_2C_2P_2\\C_2 &C_2P_2\end{pmatrix}\\
    &\implies& C_1=C_2 \text{ and } C_1P_1=C_2P_2\\
    &\implies&  c_1=c_2, c_1p_1=c_2p_2, c_1q_1=c_2q_2 ~\text{and}~ d_1=d_2\\
    &\implies& (p_1,q_1,r_1,c_1,d_1)=(p_2,q_2,r_2,c_2,d_2).
    \end{eqnarray*} 
This shows that distinct elements of $\mathcal{R}_4$ are linked to distinct tuples. Hence $|\mathcal{R}_4|=(2^m-1)^5$.
\end{proof}

\noindent In the following example, we illustrate our construction of a representative involutory MDS matrix $R$ of order $4$ over $\mathbb{F}_{2^4}$.  

\begin{example}
Let  $x^4 + x + 1$ be the primitive polynomial that generates $\mathbb{F}_{2^4}$. Let $\alpha$ be a primitive element that is a root of this polynomial. we define the matrix $R$ as follows:
\[R=
\begin{pmatrix}
PC &PCP\\
C &CP
\end{pmatrix}+I_4,\]
where  $C=
c\begin{pmatrix}
pq+r &p\\
q &1
\end{pmatrix}$ and
$P=\begin{pmatrix}
d+1 &d\\
d &d+1
\end{pmatrix}$ are $2\times 2$ non-singular matrices. \\
\noindent Let $p,q,r,c,d=1,~1, ~\alpha, ~\alpha, ~\alpha$. Thus, 
$C=\begin{pmatrix}
\alpha^5&\alpha\\
\alpha    &\alpha
\end{pmatrix}$ and
$P=\begin{pmatrix}
\alpha^4&\alpha\\
\alpha&\alpha^4
\end{pmatrix}$.
Hence, we have \\
	
\[PC=\begin{pmatrix}
\alpha^{11} & \alpha\\
\alpha^9  & \alpha\\
\end{pmatrix},
CP=\begin{pmatrix}
\alpha^{11} & \alpha^9\\
\alpha  & \alpha
\end{pmatrix}
\text{ and  }
PCP=\begin{pmatrix}
\alpha^8 & \alpha^{14}\\
\alpha^{14}  & 1
\end{pmatrix}.\] \\
We will now construct a representative involutory MDS matrix $R$ of order $4$.
\begin{eqnarray*}
R&=&\begin{pmatrix}
PC&PCP\\
C &CP
\end{pmatrix}+I_4\\
&=&\begin{pmatrix}
\alpha^{11}&\alpha&\alpha^{8}&\alpha^{14}\\
\alpha^9&\alpha&\alpha^{14}&1\\
\alpha^5&\alpha&\alpha^{11}&\alpha^9\\
\alpha&\alpha&\alpha&\alpha
\end{pmatrix}+I_4\\
&=&\begin{pmatrix}
\alpha^{12}&\alpha&\alpha^{8}&\alpha^{14}\\
\alpha^9&\alpha^4&\alpha^{14}&1\\
\alpha^5&\alpha&\alpha^{12}&\alpha^9\\
\alpha&\alpha&\alpha&\alpha^4
\end{pmatrix}.
\end{eqnarray*}

\end{example}

\begin{rem}
It is noteworthy that the authors of~\cite{Yang2021} demonstrate that the search space for finding involutory MDS matrices of even order over $\mathbb{F}_{2^m}$ may be reduced to $2^{(mn^2)/2}$, which is $2^{8m}$ when $n=4$. In contrast, the search space in our method is $(2^m-1)^5$, significantly smaller than that in~\cite{Yang2021}.
\end{rem}
    
\vspace*{2em}

\noindent Now, we will delve into the theoretical aspects behind the proposed matrix form given in~(\ref{rep_mat_eqn}). For this discussion, we require the following significant characteristics for involutory MDS matrices.

\begin{thm}\label{thm:4}
For any involutory MDS matrix $M$ in $\mathcal{L}_4$ written as follows:
    \[
        M=\begin{pmatrix}
            PC & PCP \\
            C & CP
        \end{pmatrix} + I_4,
    \]
the matrices $C$ and $P$ have all non-zero entries.
\end{thm}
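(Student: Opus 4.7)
The plan is to exploit the rank-$2$ factorisation
\[
N := M - I_4 = \begin{pmatrix} PC & PCP \\ C & CP \end{pmatrix} = \begin{pmatrix} P \\ I_2 \end{pmatrix}\, C\, \begin{pmatrix} I_2 & P \end{pmatrix} =: U\,C\,V,
\]
which is already consistent with Theorem~\ref{thm:2}. Since $M$ is MDS, every entry of $M$ is non-zero (viewing $1\times 1$ sub-matrices), and the bottom-left $2\times 2$ block of $M$ is literally $C$, so all four entries of $C$ are immediately non-zero. The real content of the statement is the analogous claim for $P$.

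The key observation is that for every pair of two-element subsets $R,S\subset\{1,2,3,4\}$ with $R\cap S=\emptyset$, the diagonal matrix $I_4$ contributes nothing to the sub-matrix of $M$ indexed by rows $R$ and columns $S$; hence $M_{R,S}=N_{R,S}=U_R\,C\,V_S$, where $U_R$ and $V_S$ denote the $2\times 2$ row- and column-restrictions of $U$ and $V$. Taking determinants gives
\[
\det\bigl(M_{R,S}\bigr) = \det(U_R)\,\det(C)\,\det(V_S),
\]
which is non-zero by the MDS property. Since $\det(C)\neq 0$, we obtain $\det(U_R)\det(V_S) \neq 0$ for each of the six disjoint pairs.

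The remainder is a short bookkeeping step. Reading off the rows of $U$ and columns of $V$ one computes (in characteristic $2$) $\det(U_{\{1,4\}})=\det(V_{\{2,3\}})=p_{11}$, $\det(U_{\{1,3\}})=\det(V_{\{2,4\}})=p_{12}$, $\det(U_{\{2,4\}})=\det(V_{\{1,3\}})=p_{21}$, and $\det(U_{\{2,3\}})=\det(V_{\{1,4\}})=p_{22}$, while the two remaining pairs $(\{1,2\},\{3,4\})$ and $(\{3,4\},\{1,2\})$ produce the automatically non-zero factors $\det(P)^2$ and $1$. Hence $p_{ij}^2\,\det(C)\neq 0$ for every $(i,j)$, and so each entry of $P$ is non-zero as claimed.

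The only real obstacle is spotting the right framing: recognising that the off-block-diagonal $2\times 2$ sub-matrices of $M$ coincide with those of the rank-$2$ matrix $N$, and factoring $N=UCV$ so that a single entry of $P$ is isolated in each minor. Once this framing is in place the proof collapses to a table of six tiny determinants and a handful of invocations of the MDS property.
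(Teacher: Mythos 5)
Your proof is correct and, at bottom, rests on the same idea as the paper's: the $2\times 2$ sub-matrices of $M$ whose row set and column set are disjoint coincide with those of $N=M+I_4$, their determinants each equal $p_{ij}^2\det(C)$, and the MDS property then forces every $p_{ij}\neq 0$ (the entries of $C$ being non-zero for free, since $C$ is literally the bottom-left block of $M$). The four nontrivial index pairs you list are exactly the paper's sub-matrices $M_1,\dots,M_4$. Where you differ is purely in how the determinants are computed: the paper writes out each of the four minors entrywise in terms of the $c_{ij}$ and $p_{ij}$ and expands by brute force, whereas you obtain them all at once from the rank factorization $N=UCV$, with $U$ the $4\times 2$ matrix stacking $P$ over $I_2$ and $V$ the $2\times 4$ matrix juxtaposing $I_2$ and $P$, so that $\det(M_{R,S})=\det(U_R)\det(C)\det(V_S)$ for disjoint $R,S$. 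This is a genuine streamlining: it explains \emph{why} the minors factor so cleanly, reduces the whole computation to eight trivial $2\times 2$ determinants, and would incidentally have caught the paper's typo where $|M_4|$ is stated as $p_{22}^2\det(C)$ instead of $p_{21}^2\det(C)$. I verified your bookkeeping table of $\det(U_R)$ and $\det(V_S)$; all entries are correct, so the argument is sound.
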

\begin{proof}
Let $M=(m_{ij}), C=\begin{pmatrix}
c_{11} &c_{12}\\c_{21} &c_{22}
\end{pmatrix}$ and $P=\begin{pmatrix}
p_{11} &p_{12}\\p_{21} &p_{22}
\end{pmatrix}$.

Since $M$ is an MDS matrix, we can say that all the entries in $C$ are non-zero. Now, consider the following sub-matrices of $M$:
\[M_1=
\begin{pmatrix}
m_{12} &m_{14}\\m_{32} &m_{34}
\end{pmatrix},
M_2=
\begin{pmatrix}
m_{12} &m_{13}\\m_{42} &m_{43}
\end{pmatrix},
M_3=
\begin{pmatrix}
m_{21} &m_{24}\\m_{31} &m_{34}
\end{pmatrix},
M_4=
\begin{pmatrix}
m_{21} &m_{23}\\m_{41} &m_{43}
\end{pmatrix}.\]

These sub-matrices can be represented in terms of the elements of $C$ and $P$ as follows:

\begin{equation*}
    \begin{aligned}
        M_1 &=
        \begin{pmatrix}
            c_{12}p_{11} + c_{22}p_{12} &c_{11}p_{11}p_{12} + c_{21}p_{12}^2 + c_{12}p_{11}p_{22} + c_{22}p_{12}p_{22}\\
            c_{12} &c_{11}p_{12} + c_{12}p_{22}
        \end{pmatrix}\\
        M_2 &=
        \begin{pmatrix}
            c_{12}p_{11} + c_{22}p_{12}  &c_{11}p_{11}^2 + c_{21}p_{11}p_{12} + c_{12}p_{11}p_{21} + c_{22}p_{12}p_{21}\\
            c_{22}        &c_{21}p_{11} + c_{22}p_{21}
        \end{pmatrix}\\
        M_3 &=
        \begin{pmatrix}
            c_{11}p_{21} + c_{21}p_{22}  &c_{11}p_{12}p_{21} + c_{21}p_{12}p_{22} + c_{12}p_{21}p_{22} + c_{22}p_{22}^2\\
            c_{11}                    & c_{11}p_{12} + c_{12}p_{22}
        \end{pmatrix}\\
        M_4 &=  
        \begin{pmatrix}
            c_{11}p_{21} + c_{21}p_{22}  &c_{11}p_{11}p_{21} + c_{12}p_{21}^2 + c_{21}p_{11}p_{22} + c_{22}p_{21}p_{22}\\
            c_{21}                 & c_{21}p_{11} + c_{22}p_{21}
        \end{pmatrix}.
    \end{aligned}
\end{equation*}

The following are the determinants of these sub-matrices.
\begin{eqnarray*}
|M_1|=c_{12}c_{21}p_{12}^2 + c_{11}c_{22}p_{12}^2=p_{12}^2\det{(C)}\\
|M_2|=c_{12}c_{21}p_{11}^2 + c_{11}c_{22}p_{11}^2=p_{11}^2\det{(C)}\\
|M_3|=c_{12}c_{21}p_{22}^2 + c_{11}c_{22}p_{22}^2=p_{22}^2\det{(C)}\\
|M_4|=c_{12}c_{21}p_{21}^2 + c_{11}c_{22}p_{21}^2=p_{22}^2\det{(C)}.
\end{eqnarray*}
Given that $M$ is an MDS matrix, all the sub-matrices $M_1,M_2,M_3$ and $M_4$ are non-singular. Therefore, $p_{11}$, $p_{12}$, $p_{21}$, and $p_{22}$ must be non-zero.
\end{proof}

By Theorem~\ref{thm:4}, we know that neither the matrices $P$ nor $C$ can have zero entries, leading to the following result.

\begin{cor}\label{cor:1}
Let $R=\begin{pmatrix}
            PC & PCP \\
            C & CP
        \end{pmatrix} + I_4$ 
be an involutory MDS matrix in $\mathcal{R}_4$, where 
$C = c\begin{pmatrix}
    pq+r & p\\
    q & 1
\end{pmatrix}$ and  
$P = \begin{pmatrix}
    d+1 & d\\
    d & d+1
\end{pmatrix}$. Then, we have
\begin{enumerate}
    \item $r \neq pq$,
    \item $d \neq 1$.
\end{enumerate}
\end{cor}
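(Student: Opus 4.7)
The plan is to deduce both parts directly from Theorem~\ref{thm:4}, which asserts that every entry of $P$ and $C$ must be non-zero whenever $R$ is an involutory MDS matrix in $\mathcal{R}_4$. Since the corollary imposes the specific parametric forms
\[
C = c\begin{pmatrix} pq+r & p \\ q & 1 \end{pmatrix}, \qquad
P = \begin{pmatrix} d+1 & d \\ d & d+1 \end{pmatrix},
\]
the task reduces to translating the non-vanishing of individual entries into conditions on the scalar parameters $p,q,r,c,d \in \mathbb{F}_{2^m}^*$.

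For part~1, I would look at the $(1,1)$ entry of $C$, which equals $c(pq+r)$. By Theorem~\ref{thm:4} this must be non-zero, and since $c \in \mathbb{F}_{2^m}^*$ is already non-zero by hypothesis, the factor $pq + r$ must also be non-zero. Working in characteristic~$2$, the equation $pq + r = 0$ is equivalent to $r = pq$, so the non-vanishing translates precisely into $r \neq pq$.

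For part~2, I would apply the same reasoning to the $(1,1)$ entry of $P$, namely $d+1$. Theorem~\ref{thm:4} forces $d+1 \neq 0$, which in characteristic~$2$ is the same as $d \neq 1$. Both conclusions therefore follow immediately once Theorem~\ref{thm:4} is invoked; there is no real obstacle, and the only care required is to note the chain of implications $c \neq 0 \Rightarrow pq+r \neq 0 \Rightarrow r \neq pq$ and to interpret $d+1 \neq 0$ correctly in characteristic~$2$.
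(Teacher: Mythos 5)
Your proposal is correct and matches the paper's intent exactly: the corollary is stated there as an immediate consequence of Theorem~\ref{thm:4} (all entries of $P$ and $C$ non-zero), and your translation of the $(1,1)$ entries $c(pq+r)$ and $d+1$ into the conditions $r \neq pq$ and $d \neq 1$ in characteristic $2$ is precisely the intended argument.
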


\noindent The following theorem states that we can generate only one $4 \times 4$ involutory MDS matrix from a representative involutory MDS matrix of order $4$ by using the non-singular diagonal matrix $D=\Diag(1,b_1,b_2,b_3)$ over $\F_{2^m}^*$.

\begin{thm}\label{thm:5}
If a $4\times 4$ involutory MDS matrix $M \in [R]$, where $[R]$ denotes the equivalence class of the representative matrix $R \in \mathcal{R}_4$, then there exists a unique diagonal matrix $D = \Diag(1,b_1,b_2,b_3)$ over $\mathbb{F}_{2^m}^*$ such that $D^{-1}RD = M$.
\end{thm}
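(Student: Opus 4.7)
My plan is to establish uniqueness; existence is immediate from the hypothesis that $M \in [R]$, which by the definition of the equivalence class supplies at least one diagonal matrix $D = \Diag(1, b_1, b_2, b_3)$ with $D^{-1} R D = M$. For uniqueness, I would reduce the statement to the observation that a diagonal matrix commuting with a matrix that has no zero entries must be scalar.

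Specifically, I would suppose $D_1 = \Diag(1, b_1, b_2, b_3)$ and $D_2 = \Diag(1, b_1', b_2', b_3')$ both conjugate $R$ to $M$, and set $E := D_2 D_1^{-1} = \Diag(1, e_1, e_2, e_3)$ with $e_i = b_i'/b_i \in \mathbb{F}_{2^m}^*$. Combining the two equations $D_1^{-1}RD_1 = D_2^{-1}RD_2$ gives $E R = R E$. Writing $R = (r_{ij})$ and comparing entries of $ER$ and $RE$ yields
\begin{equation*}
(e_i + e_j)\, r_{ij} = 0 \qquad \text{for all } 1 \leq i, j \leq 4,
\end{equation*}
where we are using characteristic two. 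The key observation is that since $R$ is MDS, every entry $r_{ij}$ is non-zero (each single entry is a $1 \times 1$ sub-matrix, which must be non-singular). Therefore $e_i = e_j$ for every pair $i, j$. Because the $(1,1)$-entry of $E$ is pinned to $1$, this forces $e_1 = e_2 = e_3 = 1$, hence $E = I_4$ and so $D_1 = D_2$.

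I do not anticipate any significant obstacle: the argument is essentially a one-line reduction (a diagonal commuting with $R$ must equalise indices wherever $R$ has a non-zero entry) together with the elementary fact that MDS matrices have all entries non-zero. The one subtlety worth flagging is that the MDS hypothesis is used to rule out vanishing \emph{diagonal} entries of $R$ after the $+I_4$ shift in the form of equation~(\ref{rep_mat_eqn}); without this, $E$ could fail to be forced to the identity, and the whole class-representative normalisation (first diagonal entry of $D$ equal to $1$) would not yield a unique $D$.
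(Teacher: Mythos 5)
Your proof is correct, but it takes a genuinely different route from the paper's. The paper proves uniqueness constructively: it writes both $M$ and $R$ in the block form of Theorem~\ref{thm:3} as $\begin{pmatrix} PC & PCP\\ C & CP\end{pmatrix}+I_4$, derives $C_1=B_2CB_1^{-1}$ and $P_1=B_1PB_2^{-1}$ from the conjugation relation, and then exploits the normalisation that $P_1=\begin{pmatrix} d+1 & d\\ d & d+1\end{pmatrix}$ has all row and column sums equal to $1$ to solve explicitly for $b_1=\sqrt{p_{11}p_{12}/(p_{21}p_{22})}$, $b_2=p_{11}+b_1p_{21}$, $b_3=p_{12}+b_1p_{22}$; uniqueness follows because these values are forced. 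Your argument instead observes that two conjugating matrices $D_1,D_2$ differ by a diagonal matrix $E=D_2D_1^{-1}$ in the centralizer of $R$, that $(E_{ii}+E_{jj})r_{ij}=0$ forces all diagonal entries of $E$ to coincide wherever $r_{ij}\neq 0$, and that the MDS property of $R$ (all entries non-zero, since $R=DMD^{-1}$ is MDS by Lemma~\ref{lem:1}) plus the pinned $(1,1)$-entry then give $E=I_4$. Your route is shorter, avoids the block decomposition entirely, and generalises verbatim to any order $n$ and indeed to any matrix with no zero entries; the paper's route is longer but buys the explicit formulas for $b_1,b_2,b_3$ and for $C_1,P_1$, which are reused in the subsequent worked example and underlie the algorithm. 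Two small remarks: your existence claim is indeed immediate from how the equivalence relation is defined, matching the paper's treatment; and your closing caveat about ``vanishing diagonal entries of $R$ after the $+I_4$ shift'' is slightly off target --- the commutation constraint is vacuous on the diagonal, and what your argument actually needs is that the \emph{off-diagonal} entries $r_{1j}$ (enough to link index $1$ to the others) are non-zero, which the MDS property supplies.
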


\begin{proof}
    Let $M$ be any involutory MDS matrix of order $4$ over $\F_{2^m}$ and suppose that there exists a non-singular diagonal matrix $D=\Diag(1,b_1,b_2,b_3)$ and a representative involutory MDS  matrix $R\in\mathcal{R}_4$ such that
    \begin{equation}\label{thm:5_eq:1}
    D^{-1}RD=M.
    \end{equation}
    Since $M$ is an involutory MDS matrix, by using Theorem \ref{thm:3}, it can be expressed as follows: 
    \[M=\begin{pmatrix}
    PC &PCP\\
    C &CP
    \end{pmatrix}+I_4,\]
    where $C,P$ are $2\times 2$ non-singular matrices.  Let 
    $C=\begin{pmatrix}
    c_{11}&c_{12}\\
    c_{21}&c_{22}\\
    \end{pmatrix}$ and 
    $P=\begin{pmatrix}
    p_{11}&p_{12}\\
    p_{21}&p_{22}\\
    \end{pmatrix}$.
    
    We can write diagonal matrix $D$ as 
    $\begin{pmatrix}
    B_1 &O\\
    O&B_2
    \end{pmatrix}$,
    where $B_1=\begin{pmatrix}
    1 &0\\
    0 &b_1
    \end{pmatrix}$ and
    $B_2=\begin{pmatrix}
    b_2 &0\\
    0 &b_3
    \end{pmatrix}$. 
    Given that $R\in\mathcal{R}_4$, so we can also write $R$ as follows: 
    \[R=\begin{pmatrix} P_1C_1 &P_1C_1P_1\\
     C_1 &C_1P_1 
     \end{pmatrix}+I_4, \]
    where $C_1$ and $P_1$ are still undetermined.\\
    
    \noindent Now since $DMD^{-1}=R$, we have
    \begin{equation*}
    \begin{pmatrix}
    B_1 &O\\
    O&B_2
    \end{pmatrix}
    \left[\begin{pmatrix}
    PC &PCP\\
    C &CP
    \end{pmatrix}+I_4\right]\begin{pmatrix}
    B_1^{-1} &O\\
    O&B_2^{-1}
    \end{pmatrix}=\begin{pmatrix}
    P_1C_1 &P_1C_1P_1\\
    C_1 &C_1P_1
    \end{pmatrix}+I_4
    \end{equation*}  
    or,
    \begin{equation*}
    \begin{pmatrix}
    B_1 &O\\
    O&B_2
    \end{pmatrix}
    \begin{pmatrix}
    PC &PCP\\
    C &CP
    \end{pmatrix}\begin{pmatrix}
    B_1^{-1} &O\\
    O&B_2^{-1}
    \end{pmatrix}=\begin{pmatrix}
    P_1C_1 &P_1C_1P_1\\
    C_1 &C_1P_1
    \end{pmatrix}\\
   \end{equation*}
    or,   
    \begin{equation}\label{thm:5_eq:2}
    \begin{pmatrix}
    B_1PCB_1^{-1} &B_1PCPB_2^{-1}\\
    B_2CB_1^{-1} &B_2CPB_2^{-1}
    \end{pmatrix}
    =\begin{pmatrix}
    P_1C_1 &P_1C_1P_1\\
    C_1 &C_1P_1
    \end{pmatrix}.
     \end{equation}
            
    We have from (\ref{thm:5_eq:2}) 
    \begin{eqnarray}
    B_1PCB_1^{-1}=P_1C_1\nonumber\\
    B_1PCPB_2^{-1}=P_1C_1P_1\nonumber\\
    B_2CPB_2^{-1}=C_1P_1\label{thm:5_eq:3}\\
    B_2CB_1^{-1}=C_1\label{thm:5_eq:4}.
    \end{eqnarray}
    
    Let $C_1=c\begin{pmatrix}
    s&p\\
    q &1
    \end{pmatrix}$. Then by using (\ref{thm:5_eq:4}), we obtain
    \begin{eqnarray*}
    s&=&b_2c_{11}c^{-1}\\
    p&=&b_2b_1^{-1}c_{12}c^{-1}\\
    q&=&b_3c_{21}c^{-1}\\
    c&=&b_3b_1^{-1}c_{22}.
    \end{eqnarray*}

    Since $c_{11}, c_{12}, c_{21}$, and $c_{22}$ are elements of the sub-matrix $C$ of the MDS matrix $M$, we can conclude that all of $c_{11}, c_{12}, c_{21}$, and $c_{22}$ are non-zero. Additionally, since $b_1, b_2$, and $b_3$ are non-zero, we can infer that $s, p, q$, and $c$ are also non-zero. Thus, we have

    \begin{equation}\label{Eqn_the_form_C1_in_Rep}
        \begin{aligned}
            C_1&=
            \begin{pmatrix}
                b_2c_{11}& b_2b_1^{-1}c_{12}\\
                b_3c_{21} & b_3b_1^{-1}c_{22}
            \end{pmatrix}.
        \end{aligned}
    \end{equation}
        
    Now, $|C_1|=b_2b_3b_1^{-1}c_{11}c_{22}+b_2b_3b_1^{-1}c_{21}c_{12}=b_2b_3b_1^{-1}|C|$, which indicates that $|C_1|\neq 0$. Since $C_1$ is non-singular, let 
    $\begin{vmatrix}
        s&p\\
        q &1
    \end{vmatrix}=r$, where $r\neq 0$. So $r=pq+s\implies s=pq+r$. Thus, we have 
    $C_1=c\begin{pmatrix}
        pq+r&p\\
        q &1
    \end{pmatrix}$. Also, according to our proposed representative matrix form, we can write the matrix $P_1$ as $P_1=\begin{pmatrix} d+1 &d\\ d &d+1 \end{pmatrix}$, where $d\in \mathbb{F}_{2^m}^{*}\setminus \set{1}$.

    \noindent Now, from (\ref{thm:5_eq:3}) and (\ref{thm:5_eq:4}), we have
    \begin{eqnarray}
    &&(B_2CB_1^{-1})^{-1}B_2CPB_2^{-1}=C_1^{-1}C_1P_1\nonumber\\
    &\implies& B_1C^{-1}B_2^{-1}B_2CPB_2^{-1}=C_1^{-1}C_1P_1\nonumber\\
    &\implies& B_1PB_2^{-1}=P_1\nonumber\\
    &\implies& \begin{pmatrix}
    1 &0\\
    0&b_1
    \end{pmatrix}
    \begin{pmatrix}
    p_{11} &p_{12}\\
    p_{21}&p_{22}
    \end{pmatrix}
    \begin{pmatrix}
    b_2^{-1} &0\\
    0&b_3^{-1}
    \end{pmatrix}=P_1\nonumber\\
    &\implies& 
    \begin{pmatrix}
    p_{11} &p_{12}\\
    b_1p_{21}&b_1p_{22}
    \end{pmatrix}
    \begin{pmatrix}
    b_2^{-1} &0\\
    0&b_3^{-1}
    \end{pmatrix}=P_1\nonumber\\
    &\implies& 
    P_1=\begin{pmatrix}
    p_{11}b_2^{-1} &p_{12}b_3^{-1}\\
    b_1p_{21}b_2^{-1}&b_1p_{22}b_3^{-1}
    \end{pmatrix}.\label{Eqn_form_P1_in_Rep}
    \end{eqnarray}

    Thus, from Equations (\ref{Eqn_the_form_C1_in_Rep}) and (\ref{Eqn_form_P1_in_Rep}), we observe that the matrices $P_1$ and $C_1$, i.e., the representative matrix $R$, can be expressed in terms of the entries of the sub-matrices $C$ and $P$ of the MDS matrix $M$.\\

    \noindent Now, we will determine the values $b_1,b_2$ and $b_3$ of the diagonal matrix $D$ in terms of the entries of the sub-matrices $C$ and $P$. Since the sum of each row or column of $P_1$ is equal to 1, we have
    \begin{eqnarray}
        &&p_{11}b_2^{-1}+p_{12}b_3^{-1}=1\label{thm:5_eq:5}\\
        &&b_1p_{21}b_2^{-1}+b_1p_{22}b_3^{-1}=1\label{thm:5_eq:6}\\
        &&p_{11}b_2^{-1}+b_1p_{21}b_2^{-1}=1\label{thm:5_eq:7}\\
        &&p_{12}b_3^{-1}+b_1p_{22}b_3^{-1}=1.\label{thm:5_eq:8}
    \end{eqnarray}

    Now, from (\ref{thm:5_eq:7}) and (\ref{thm:5_eq:8}), we have
    \begin{eqnarray}
    &&b_2=p_{11}+b_1p_{21}\label{thm:5_eq:9}\\
    &&b_3=p_{12}+b_1p_{22}\label{thm:5_eq:10}.
    \end{eqnarray}

    By solving the above equations, we have $b_1=\sqrt{\frac{p_{11}p_{12}}{p_{21}p_{22}}}$. Thus, $b_1$ has been uniquely determined by the terms of the entries of the sub-matrix $P$. We can also uniquely determine $b_2$ and $b_3$ by substituting the value of $b_1$ in (\ref{thm:5_eq:9}) and (\ref{thm:5_eq:10}). Thus, we have
    \begin{equation}\label{Eqn_form_Diag_matrix}
    \begin{aligned}
    b_1&= \sqrt{\frac{p_{11}p_{12}}{p_{21}p_{22}}},~b_2=p_{11}+p_{21}\sqrt{\frac{p_{11}p_{12}}{p_{21}p_{22}}},~b_3=p_{12}+p_{22}\sqrt{\frac{p_{11}p_{12}}{p_{21}p_{22}}}.
    \end{aligned}
    \end{equation}

    Now, according to Theorem \ref{thm:4}, we have $p_{11}, p_{12}, p_{21}, p_{22} \in \mathbb{F}_{2^m}^{*}$ which implies $b_1 \in \mathbb{F}_{2^m}^{*}$. 

    If possible let $b_2=0$. This implies that
    \begin{equation*}
        \begin{aligned}
            & p_{11}+b_1p_{21}=0\\
            \implies & b_1=\frac{p_{11}}{p_{21}}\\
            \implies & \sqrt{\frac{p_{11}p_{12}}{p_{21}p_{22}}} = \frac{p_{11}}{p_{21}}\\
            \implies & p_{11}p_{22}+p_{12}p_{21}=0.
        \end{aligned}
    \end{equation*}

    This leads to a contradiction since $P$ is non-singular. Therefore, $b_2 \in \mathbb{F}_{2^m}^{*}$. Similarly, we can demonstrate that $b_3 \in \mathbb{F}_{2^m}^{*}$. Consequently, we can uniquely determine the values $b_1, b_2$, and $b_3$ of the diagonal matrix $D$. In other words, we can conclude that there exists a unique diagonal matrix $D = \text{Diag}(1, b_1, b_2, b_3)$ over $\mathbb{F}_{2^m}^{*}$ such that $D^{-1}RD = M$.
\end{proof}

\noindent In the following example of an involutory MDS matrix $M$, taken from the literature~\cite[Example~4]{psa}, we illustrate the representative involutory MDS matrix $R$ and the non-singular diagonal matrix $D$, which are used to create the involutory MDS matrix $M$.

\begin{example}
Let  $x^4 + x + 1$ be the primitive polynomial that generates $\F_{2^4}$. Let $\alpha$ be a primitive element that is a root of this polynomial. Now, consider the $4 \times 4$ involutory MDS matrix $M$ over $\mathbb{F}_{2^4}$.

\[M=\begin{pmatrix}
 1   & 1 & 1 & 1 \\
 1   & \alpha  & \alpha^2  & \alpha^5 \\
\alpha^7 & \alpha^{10}  & \alpha & \alpha^5\\
\alpha^9  & \alpha^2 & \alpha^{10}   & 1
\end{pmatrix}.\]
Theorem \ref{thm:3} allows $M$ to be expressed as 
 \[M=
 \begin{pmatrix}
 PC &PCP\\
 C &CP
\end{pmatrix}+I_4,\]

where  $C$ and $P$ are $2\times 2$ non-singular matrices. Let
$C=\begin{pmatrix}
c_{11}&c_{12}\\
c_{21}&c_{22}
\end{pmatrix}$ and
$P=\begin{pmatrix}
p_{11}&p_{12}\\
p_{21}&p_{22}
\end{pmatrix}$.\\

Here $C=\begin{pmatrix}
\alpha^7 & \alpha^{10}\\
\alpha^9  & \alpha^2\\
\end{pmatrix}$
and 
$CP=\begin{pmatrix}
	\alpha+1 & \alpha^5\\
	\alpha^{10}   & 0
    \end{pmatrix}
	=\begin{pmatrix}
     \alpha^4 & \alpha^5\\
     \alpha^{10}   & 0
     \end{pmatrix}.$
 Then, $P=
\begin{pmatrix}
\alpha^{10} & \alpha^{8}\\
1   & 1
\end{pmatrix}$.\\

From~(\ref{Eqn_form_Diag_matrix}), we can find the unique diagonal matrix $D=\Diag(1,b_1,b_2,b_3)$ as  
\begin{eqnarray*}
b_1&=&\sqrt{\frac{p_{11}p_{12}}{p_{21}p_{22}}}=\sqrt {\alpha^{10}.\alpha^8}=\alpha^{9}\\
b_2&=&p_{11}+b_1p_{21}=\alpha^{10}+\alpha^{9}=\alpha^{13}\\
b_3&=&p_{12}+b_1p_{22}=\alpha^{8}+\alpha^{9}=\alpha^{12}.
\end{eqnarray*}

We will now locate a representative involutory MDS matrix $R$ of the class, to which the involutory MDS matrix $M$ belongs.
Theorem \ref{thm:3} also allows $R$ to be expressed as 
 \[R=\begin{pmatrix}
P_1C_1 &P_1C_1P_1\\
C_1 &C_1P_1
\end{pmatrix}+I_4.\]

From (\ref{Eqn_the_form_C1_in_Rep}) and (\ref{Eqn_form_P1_in_Rep}), we can find $C_1$ and $P_1$ as follows: 
\[C_1= \begin{pmatrix}
b_2c_{11} & b_2b_1^{-1}c_{12}\\
b_3c_{21} & b_3b_1^{-1}c_{22}
\end{pmatrix}=\begin{pmatrix}
\alpha^5 & \alpha^{14}\\
\alpha^6 & \alpha^5
\end{pmatrix}
,\text{and }
P_1=\begin{pmatrix}
p_{11}b_2^{-1} &p_{12}b_3^{-1}\\
b_1p_{21}b_2^{-1}&b_1p_{22}b_3^{-1}
\end{pmatrix}
=\begin{pmatrix}
\alpha^{12} &\alpha^{11}\\
\alpha^{11} &\alpha^{12}
\end{pmatrix}.\]
Now, we get
\[P_1C_1=\begin{pmatrix}
0 & \alpha^6\\
\alpha^9 & \alpha^4
\end{pmatrix}
, C_1P_1=\begin{pmatrix}
\alpha^4 & \alpha^6\\
\alpha^9 & 0
\end{pmatrix}
,\text{and }
P_1C_1P_1=\begin{pmatrix}
\alpha^2 &\alpha^3\\
\alpha^{13} &\alpha^2
\end{pmatrix}.\] 
Hence, we have 
\[R=\begin{pmatrix}
        P_1C_1 &P_1C_1P_1\\
        C_1 &C_1P_1
    \end{pmatrix}+I_4
    =
    \begin{pmatrix}
        1   & \alpha^6 & \alpha^2 & \alpha^3 \\
        \alpha^9   & \alpha  & \alpha^{13}  & \alpha^2 \\
        \alpha^5 & \alpha^{14}  & \alpha & \alpha^6\\
        \alpha^6  & \alpha^5 & \alpha^9   & 1
    \end{pmatrix}.
\]
\end{example}

\noindent In~\cite{Tuncay23}, the authors propose a representative matrix form to generate all involutory MDS matrices of order $4$. In their representative matrix, the sum of any rows (and columns) equals $1$. In the following theorem, we prove that these conditions are also guaranteed for our representative form.

\begin{thm}\label{thm:6}
The sum of any rows (and columns) in the involutory matrix $M$ in $\mathcal{L}_{2n}$ written as follows:
\begin{equation*}
M=\begin{pmatrix}
PC &PCP\\
C &CP
\end{pmatrix}+I_{2n} 
\end{equation*}
is equal to 1 if and only if the sum of any rows (and columns) of $P$ is equal to 1.
\end{thm}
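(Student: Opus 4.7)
The plan is to translate ``all row sums equal to $1$'' into the single matrix--vector equation $M\mathbf{1}_{2n}=\mathbf{1}_{2n}$ (with $\mathbf{1}_k$ the all--ones column vector of length $k$), and similarly ``all column sums equal to $1$'' into $\mathbf{1}_{2n}^{T}M=\mathbf{1}_{2n}^{T}$. Writing $\mathbf{1}_{2n}=\bigl(\mathbf{1}_n^{T},\mathbf{1}_n^{T}\bigr)^{T}$ and using the block form of $M$, I can reduce everything to simple conditions on $P$ using the non--singularity of $C$.

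First I would handle rows. A direct block multiplication gives
\[
M\mathbf{1}_{2n}\;=\;\begin{pmatrix}PC\mathbf{1}_n+PCP\mathbf{1}_n\\ C\mathbf{1}_n+CP\mathbf{1}_n\end{pmatrix}+\mathbf{1}_{2n}\;=\;\begin{pmatrix}PC(I_n+P)\mathbf{1}_n\\ C(I_n+P)\mathbf{1}_n\end{pmatrix}+\mathbf{1}_{2n}.
\]
Hence the row-sum condition $M\mathbf{1}_{2n}=\mathbf{1}_{2n}$ is equivalent to $C(I_n+P)\mathbf{1}_n=0$ and $PC(I_n+P)\mathbf{1}_n=0$. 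Because $C$ (and therefore $PC$) is non--singular, both equations collapse to $(I_n+P)\mathbf{1}_n=0$, i.e.\ $P\mathbf{1}_n=\mathbf{1}_n$, which is exactly the statement that every row sum of $P$ equals $1$.

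Next I would treat columns analogously by multiplying on the left by $\mathbf{1}_{2n}^{T}$. The same block computation yields
\[
\mathbf{1}_{2n}^{T}M\;=\;\bigl(\mathbf{1}_n^{T}(I_n+P)C,\;\mathbf{1}_n^{T}(I_n+P)CP\bigr)+\mathbf{1}_{2n}^{T},
\]
so the column--sum condition becomes $\mathbf{1}_n^{T}(I_n+P)C=0$ and $\mathbf{1}_n^{T}(I_n+P)CP=0$. Since $C$ and $CP$ are non--singular, both reduce to $\mathbf{1}_n^{T}(I_n+P)=0$, i.e.\ $\mathbf{1}_n^{T}P=\mathbf{1}_n^{T}$, which is precisely the column--sum condition on $P$. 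Combining the two equivalences gives the claim.

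There is no real obstacle: the argument is essentially a two--line reduction once one passes to the vector form $M\mathbf{1}_{2n}=\mathbf{1}_{2n}$ and $\mathbf{1}_{2n}^{T}M=\mathbf{1}_{2n}^{T}$. The only point to be careful about is invoking the non--singularity of $C$ (guaranteed since $M\in\mathcal{L}_{2n}$) to cancel it from both equations, and to remember that arithmetic is in characteristic $2$ so that adding $\mathbf{1}_{2n}$ to itself cancels appropriately.
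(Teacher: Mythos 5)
Your proof is correct and follows essentially the same route as the paper: both pass to the all-ones vector formulation $M\mathbf{1}_{2n}=\mathbf{1}_{2n}$, reduce in characteristic $2$ to $\bigl(\begin{smallmatrix}PC&PCP\\C&CP\end{smallmatrix}\bigr)\mathbf{1}_{2n}=0$, and cancel the non-singular factors $C$ (resp.\ $PC$) to obtain $(I_n+P)\mathbf{1}_n=0$, with the symmetric argument for columns. Your presentation as a chain of equivalences is slightly more compact than the paper's two separate directions, but the underlying computation is identical.
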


\begin{proof}
Let $M=N+I_{2n}$, where 
$N=\begin{pmatrix}
PC &PCP\\
C &CP
\end{pmatrix}$. Assume that the sum of any rows (and columns) in $M$ equals 1. Let $J$ be a $n$-dimensional column vector with all entries 1. Then, we have
\begin{eqnarray*}
&&M
\begin{pmatrix}
J\\
J
\end{pmatrix}=
\begin{pmatrix}
J\\
J
\end{pmatrix}\\
&\implies &(N+I_{2n})
\begin{pmatrix}
J\\
J
\end{pmatrix}=
\begin{pmatrix}
J\\
J
\end{pmatrix}\\
&\implies &N
\begin{pmatrix}
J\\
J
\end{pmatrix}=
\begin{pmatrix}
O\\
O
\end{pmatrix}\\
&\implies &\begin{pmatrix}
PC &PCP\\
C &CP
\end{pmatrix}
\begin{pmatrix}
J\\
J
\end{pmatrix}=
\begin{pmatrix}
O\\
O
\end{pmatrix},
\end{eqnarray*}
where $O$ is the $n$-dimensional zero-column vector. Similar to this, we have
 \[\begin{pmatrix}
J^T&J^T\\
\end{pmatrix}\begin{pmatrix}
PC &PCP\\
C &CP
\end{pmatrix}
=\begin{pmatrix}
O^T &O^T\\
\end{pmatrix}.\]	
The matrix equations above lead us to the conclusion that 
\begin{eqnarray*}
&&(PC+PCP)J=O \\
&\implies& (I_n+P)J=O ~\hspace*{2em} [\text{since P and C are non-singular matrices}]\\
&\implies& PJ=J.
\end{eqnarray*} 

Likewise, we can show that $J^TP=J^T$. This demonstrates that the sum of any rows (and columns) in $P$ is equal to 1. \\

Conversely, suppose that the sum of any rows (and columns) in $P$ equals 1. Then, we have 
\begin{eqnarray*}
&&M
\begin{pmatrix}
J\\
J
\end{pmatrix} \\
&=&(N+I_{2n})
\begin{pmatrix}
J\\
J
\end{pmatrix}\\
&=&N
\begin{pmatrix}
J\\
J
\end{pmatrix}+ \begin{pmatrix}
J\\
J
\end{pmatrix}  \\
&=& \begin{pmatrix}
PC &PCP\\
C &CP
\end{pmatrix}
\begin{pmatrix}
J\\
J
\end{pmatrix}+\begin{pmatrix}
J\\
J
\end{pmatrix} \\
&=&\begin{pmatrix}
(PC+PCP)J\\
(C+CP)J
\end{pmatrix}
+\begin{pmatrix}
J\\
J
\end{pmatrix}\\
&=&\begin{pmatrix}
PC(I_n+P)J\\
C(I_n+P)J
\end{pmatrix}
+\begin{pmatrix}
J\\
J
\end{pmatrix} \\
&=&\begin{pmatrix}
PC(J+PJ)\\
C(J+PJ)
\end{pmatrix}
+\begin{pmatrix}
J\\
J
\end{pmatrix}\\
&=& \begin{pmatrix}
O\\
O
\end{pmatrix}
+\begin{pmatrix}
J\\
J
\end{pmatrix}~\hspace*{2em} [\text {Since } PJ=J]\\
&=& \begin{pmatrix}
J\\
J
\end{pmatrix}.
\end{eqnarray*} 

Similarly, we can demonstrate that 
$\begin{pmatrix}
J^T&J^T
\end{pmatrix}M=\begin{pmatrix}
J^T& J^T
\end{pmatrix}$.
Hence, the sum of any rows (and columns) in $M$ equals 1.
\end{proof}

\section{Experimental Results}~\label{Sec:Experimental_result}
This section presents our algorithm for generating all $4\times 4$ involutory MDS matrices over $\F_{2^m}$.  Now, let us consider the collection of diagonal matrices of order $4$ as follows: 

$$\mathcal{D}_4=\{\Diag(1,b_1,b_2,b_3)\mid b_1, b_2, b_3 \in \F_{2^m}^*\}.$$

\noindent We follow the steps below:

\begin{enumerate}
\item First, we generate all involutory MDS class representative matrices of order $4$.
\item From these involutory MDS class representative matrices, we generate all $4\times 4$ involutory MDS matrices.
\end{enumerate}

\begin{rem}
It should be noted that while generating an involutory MDS matrix, using the involutory matrix form $R$ offers an additional benefit. Since $R$ is always involutory, we have $|R| = 1$. Also, we know that if entries of the inverse (here $R^{-1}=R$) of a $4 \times 4$ non-singular matrix are non-zero, then all its $3 \times 3$ sub-matrices are non-singular. Thus, we need to check only sub-matrices of $R$ of order $1$ and $2$ to verify the MDS property of $R$.
\end{rem}

\begin{algorithm}[htpb!]
\caption{Printing all $4\times 4$  involutory MDS matrices over $\F_{2^m}$}
{\bf Input:} Finite Field $\F_{2^m}$.\\
{\bf Output:} All Involutory MDS matrices of order $4\times4$. 
\begin{algorithmic}[1]
\For{$p,q,r,c\in \F_{2^m}^*$ and $d\in \F_{2^m}^* \setminus \set{1}$}
\State $C\gets c\begin{pmatrix}
pq+r & p\\
q & 1
\end{pmatrix}$
and $P\gets \begin{pmatrix}
d+1 & d\\
d & d+1
\end{pmatrix}$
\State $R\gets \begin{pmatrix}
PC & PCP\\
C & CP
\end{pmatrix}+I_4$
\If{$R$ is not MDS} 
\State continue
\Else
\For{$D \in \mathcal{D}_4$ }
\State Print matrix $D^{-1}RD$
\EndFor
\EndIf
\EndFor
\end{algorithmic}
\end{algorithm}

Now, we generate all $4\times 4$ involutory MDS matrices over $\F_{2^m}$ for $m=3,4,5,6,7,8$ by searching through all involutory MDS class representative matrices over these finite fields. The following table gives the experimental findings. 

\noindent It is worth noting that in~\cite{Kumar_MDS2024}, the authors demonstrate a hybrid approach for generating all MDS matrices, resulting in a search space of $2^{9m}$ to generate all $4\times 4$ involutory MDS matrices. In another paper~\cite{Yang2021}, the search space for generating all $4\times 4$ involutory MDS matrices is $2^{8m}$. Additionally, a recent study~\cite{Tuncay23} proposes a hybrid method to generate all $4\times 4$ involutory MDS matrices over $\mathbb{F}_{2^m}$, reducing the search space to $(2^m-1)^8$ to identify all $4\times 4$ representative involutory MDS matrices. However, the enumeration of all $4\times 4$ involutory MDS matrices in these papers is limited to finite fields of small order; specifically, they can find the count $\mathbb{F}_{2^m}$ for $m\leq 4$. In contrast, in our proposed method, the search space for finding all representative matrices is $(2^m-1)^5$. Consequently, we can explicitly enumerate the total number of $4 \times 4$ involutory MDS matrices over $\mathbb{F}_{2^m}$ for $m=3,4,\ldots,8$.

\begin{center}
\begin{tabular}{|c|c|c|c|}\hline
Finite Field & Number of class representatives& Choices for $D$ & Total Count \\ \hline
$\F_{2^3}$ & $48$       & $7^3$  &$7^3\times48$\\   
$\F_{2^4}$ & $71856$    & $15^3$ &$15^3\times71856$\\ 
$\F_{2^5}$ & $10188240$    & $31^3$ &$31^3\times10188240$\\ 
$\F_{2^6}$ & $612203760$    & $63^3$ &$63^3\times612203760$\\ 
$\F_{2^7}$ & $26149708368$    & $127^3$ &$127^3\times26149708368$\\ 
$\F_{2^8}$ & $961006331376$    & $255^3$ &$255^3\times961006331376$\\ 
\hline
\end{tabular}
\vspace{2mm}\\
\captionof{table}{Count  of all $4\times 4$ involutory MDS matrices over $\F_{2^m}$}
\label{Table_4_IMDS}
\end{center}

\section{Conclusion}~\label{Sec:Conclusion}
In this article, we have presented some characterizations of involutory MDS matrices of even order. Our contributions include introducing a new matrix form for obtaining all involutory MDS matrices of even order and comparing it with other matrix forms available in the literature. Furthermore, we have proposed a technique to systematically construct all $4 \times 4$ involutory MDS matrices over a finite field $\mathbb{F}_{2^m}$. This method significantly reduces the search space by focusing on involutory MDS class representative matrices. Specifically, our approach involves searching for these representative matrices within a set of cardinality $(2^m-1)^5$. Through this method, we provide an explicit enumeration of the total number of $4 \times 4$ involutory MDS matrices over $\mathbb{F}_{2^m}$ for $m=3,4,\ldots,8$. This work may be expanded to suggest the generation of all $n \times n$ involutory MDS matrices over $\mathbb{F}_{2^m}$, where $n > 4$.\\

\noindent \textbf{Declarations}\\

\noindent \textbf{Conflict of interest} The authors declare that they have no conflict of interest.

\medskip

\bibliographystyle{plain}
\bibliography{4_IMDS_Ref}

\end{document}